\documentclass[12pt]{amsart}
\usepackage[utf8]{inputenc}
\usepackage[T1]{fontenc}

\usepackage{amsmath,amssymb}
\usepackage[initials]{amsrefs}
\usepackage{indentfirst}
\usepackage{mathrsfs}
\usepackage{hyperref}
\usepackage{amsthm}
\usepackage{thmtools}
\usepackage{graphicx}
\usepackage{caption}
\usepackage[usenames, dvipsnames]{color}
\usepackage{csquotes}
\usepackage{bookmark}
\usepackage{booktabs}
\usepackage{url}
\usepackage{physics}
\usepackage{enumerate}
\usepackage{dsfont}
\usepackage[margin=1in]{geometry}
\usepackage{xcolor}
\usepackage[foot]{amsaddr}
\usepackage{mathtools}

\usepackage{newtxtext}
\usepackage{newtxmath}

\theoremstyle{plain}
\newtheorem{theorem}{Theorem}
\newtheorem{lemma}[theorem]{Lemma}
\newtheorem{proposition}[theorem]{Proposition}
\newtheorem{corollary}[theorem]{Corollary}

\newcommand{\propref}[1]{Proposition~\ref{#1}}
\newcommand{\lemref}[1]{Lemma~\ref{#1}}
\newcommand{\secref}[1]{\S~\ref{#1}}

\newcommand{\kl}{\text{QRE}}
\newcommand{\divg}[2]{\mathsf{D}(#1||#2)}
\newcommand{\divgbig}[2]{\mathsf{D}\big(#1\rVert#2\big)}
\newcommand{\divgchisq}[2]{\mathsf{D}_{{\kappa}}(#1\rVert#2)}
\newcommand{\divgkl}[2]{\mathsf{D}_{\kl}(#1\rVert#2)}
\newcommand{\divgklbig}[2]{\mathsf{D}_{\kl}\big(#1 \big\lVert #2\big)}
\newcommand{\divgarg}[3]{\mathsf{D}_{#1}(#2\rVert#3)}
\newcommand{\dvg}{\mathsf{D}}

\newcommand{\Real}{\mathbb{R}}
\newcommand{\unit}{\mathbb{I}}
\newcommand{\ce}{\mathcal{E}}
\newcommand{\cn}{\mathcal{N}}

\newcommand{\hbt}{\mathscr{H}}
\newcommand{\dm}{\mathscr{D}}
\newcommand{\vnentropy}{\mathsf{H}}
\newcommand{\vnentropymin}{\mathsf{H}^{\text{min}}}

\newcommand{\myeq}[1]{\stackrel{{#1}}{=}}
\newcommand{\mygt}[1]{\stackrel{{#1}}{>}}
\newcommand{\myle}[1]{\overset{\scriptstyle#1}{\leq}}

\newcommand{\eps}{\epsilon}

\title[]{Counter-examples for Tensorization Property of Strong Data Processing Inequality for Quantum Divergences}
\author{Yu Cao}
\address{Institute of Natural Sciences, School of Mathematical Sciences, \& Ministry of Education Key Laboratory in Scientific and Engineering Computing, Shanghai Jiao Tong University, Shanghai 200240, China}
\email{yucao@sjtu.edu.cn}
\date{}

\begin{document}

\maketitle

\begin{abstract}
The data processing inequality is a fundamental property that describes the loss of information through noisy channels. A more refined description is characterized by the strong data processing inequality (SDPI). In classical information theory, the tensorization of strong data processing inequality holds for a whole family of $f$-divergences. However, its quantum counterpart is less known. The tensorization of SDPI was shown only for some special cases previously, and the general understanding about the tensorization property of SDPI for quantum divergences remains open. In this work, we report two negative results: the tensorization property fails for certain quantum chi-square divergences, and it also does not hold for the quantum relative entropy.
\end{abstract}

\section{Introduction}

The data processing inequality is a fundamental principle in information theory, characterizing the loss of information through noisy channels. Suppose that the classical or quantum divergence is denoted as $\divg{\cdot}{\cdot}$; let $\ce$ be an arbitrary channel and let $\rho$ and $\sigma$ be two arbitrary physical states, the divergence $\divg{\cdot}{\cdot}$ is said to satisfy the data processing inequality if 
\begin{align*}
\divgbig{\ce(\rho)}{\ce(\sigma)} \le \divgarg{}{\rho}{\sigma},\qquad \forall \ce, \rho, \sigma.
\end{align*}
For a fixed non-degenerate (full-rank) reference state $\sigma$ and channel $\ce$, if the above inequality is strict, it is called the strong data processing inequality and the optimal coefficient is called the strong data processing inequality constant, which is more specifically defined as follows
\begin{align*}
\eta_{\dvg}(\ce, \sigma) := \max_{\rho \neq \sigma} \frac{\divgbig{\ce(\rho)}{\ce(\sigma)}}{\divg{\rho}{\sigma}}.
\end{align*}
The data processing inequality ensures that $\eta_{\dvg}(\ce, \sigma)\le 1$. Another closely related quantity is called the contraction coefficient $\eta_\dvg(\ce) := \sup_\sigma \eta_\dvg(\ce, \sigma)$ by taking the supremum over all non-degenerate physical states $\sigma$ \cite{hiai_contraction_2016}. These quantities are all essential concepts in characterizing the behavior of channels in information theory.

It is well known that for the whole family of classical $f$-divergences, the tensorization property holds for SDPI constants, namely,
\begin{align}
\label{eqn::tensor}
\eta_{\dvg}(\ce_1\otimes \ce_2, \sigma_1\otimes \sigma_2) = \max\big\{ \eta_{\dvg}(\ce_1, \sigma_1), \eta_{\dvg}(\ce_2, \sigma_2)\big\}.
\end{align}
See Raginsky's seminal work \cite[Theorem III.9]{raginsky_strong_2016} for detailed proofs.
The tensorization property is basically saying that the global rate of information loss for tensor product channels is fully characterized by local ones. We remark that the dimensions of channels $\ce_i$ are not necessarily the same for $i = 1, 2$, and the above conclusion can be easily generalized to a finite number of channels. Here we focus on the finite-dimensional case and the dimension could be any integer larger or equal to $2$.
Even though tensorization of SDPI constants for classical divergences is well-established, it remains open to extend the theory to quantum regime.
The quantum version of tensorization property was earlier investigated in \cite{cao_tensorization_2019}, which proved its validity for a special quantum chi-square divergence, or for a special family of quantum channels (quantum-classical channels to be more specific). The SDPI constants and contraction coefficients for quantum divergences were also systemically studied in \cite{hirche_contraction_2022,george_unified_2026}. In particular, the contraction coefficient $\eta_\dvg(\mathcal{E})$ was proved not to tensorize \cite[Sect.~3]{hirche_contraction_2022} for quantum relative entropy, but whether SDPI constants tensorize was still an open question.

In this paper, we will show that the tensorization in general fails, which negatively answers the open questions posted in \cite{cao_tensorization_2019,hirche_contraction_2022}.
More specifically, there exist quantum channels $\ce_1$, $\ce_2$, full-rank density matrices $\sigma_1$ and $\sigma_2$ such that 
\begin{align*}
\eta_{\dvg}(\ce_1\otimes \ce_2, \sigma_1\otimes \sigma_2) > \max\big\{ \eta_{\dvg}(\ce_1, \sigma_1), \eta_{\dvg}(\ce_2, \sigma_2)\big\}.
\end{align*}
In the following sections, we shall consider these two families of quantum divergences one by one: the family of quantum chi-square divergences in \secref{sec::chisq}, and quantum relative entropy in \secref{sec::kl}. These two families of quantum divergences are perhaps among the most widely used examples in quantum information theory, thus this negative result suggests that one should not expect the tensorization of SDPI in general in the quantum setting.
However, this does not imply that the tensorization property cannot hold for a special quantum chi-square divergence, or some special channels, as was proved already in \cite{cao_tensorization_2019}. The main conclusions of this paper are also summarized in Table~\ref{table::main}.

\begin{table}
\centering
\renewcommand{\arraystretch}{1.25}
\begin{tabular}{ccc}
\toprule
 & Classical & Quantum  \\
 \toprule
chi-square &  Yes & Yes for $\kappa_{1/2}$ (known from \cite{cao_tensorization_2019}) \\
 &   & No in general (see Prop.~\ref{prop::quantum_chi_square_example}) \\
\hline
relative entropy & Yes & No (see Prop.~\ref{prop::kl}) \\
 \bottomrule
\end{tabular}
\caption{Tensorization property of SDPI for classical and quantum settings}
\label{table::main}
\end{table}

\subsection*{Notations}
Denote the $N$-dimensional Hilbert space as $\hbt_N$, and denote the relevant space of density matrices as $\dm_N$. We shall denote the von-Neumann entropy as $\vnentropy(\cdot)$, the minimum output entropy as $\vnentropymin(\cdot)$ \eqref{eqn::Hmin}, the family of quantum chi-square divergences as $\divgchisq{\cdot}{\cdot}$ (associated with a function $\kappa$) \eqref{eqn::chisq}, and quantum relative entropy as $\divgkl{\cdot}{\cdot}$ \eqref{eqn::qre}; the exact definitions are given where they are first used. We shall denote the maximally entangled state on $\hbt_N \otimes \hbt_N$ as $\ket{\Psi_{N}} = \frac{1}{\sqrt{N}} \sum_{k=1}^N \ket{k}\ket{k}$, and the maximally mixed density matrix as $\pi_N = \frac{1}{N} \sum_{k=1}^N \ketbra{k} \in \dm_N$. The identity matrix is denoted as $\unit$, and Pauli matrices are denoted as $X, Y, Z$ respectively.

\section{Counter-example for quantum chi-square divergence}
\label{sec::chisq}

Due to the non-commutative nature of quantum objects, there is no unique definition for quantum chi-square divergence. A whole family was introduced in  \cite{temme_divergence_2010}, 
\begin{align}
\label{eqn::chisq}
\divgchisq{\rho}{\sigma} = \tr\big((\rho-\sigma) \Omega_{\sigma}^{\kappa}(\rho-\sigma)\big), \qquad \Omega_{\sigma}^{\kappa} := R_{\sigma}^{-1} \kappa( L_{\sigma} R_{\sigma}^{-1}),
\end{align}
where $\kappa: \Real^{+}\to \Real^{+}$ satisfies 
\begin{align*}
\mathcal{K} = \Big\{\kappa: \Real^{+}\to \Real^{+}:\ -\kappa \text{ is operator monotone},\  \kappa(1) = 1,\ x\kappa(x) = \kappa(x^{-1}) \Big\}.
\end{align*}
Besides, we shall denote the relevant SDPI constants as $\eta_\kappa(\ce,\sigma)$ for convenience.

There are a few special examples. 
An important family is called mean-alpha family:
\begin{align*}
\kappa_{\alpha}(x) =  \frac{1}{2} \big(x^{-\alpha} + x^{\alpha-1}\big).
\end{align*}
Due to symmetry, we will consider $\alpha\in [0, 1/2]$ only. 
For instance, when $\kappa_{1/2}(x) = x^{-1/2}$,
\begin{align*}
\divgchisq{\rho}{\sigma} =  \tr\big((\rho-\sigma) \sigma^{-1/2} (\rho-\sigma) \sigma^{-1/2}\big).
\end{align*}
This is perhaps one of the simplest cases with symmetry. For such a special case, it has been proved to enjoy the tensorization property.

\begin{lemma}[\cite{cao_tensorization_2019}]
The tensorization property \eqref{eqn::tensor} holds for the quantum chi-square divergence with $\kappa(x) = \kappa_{1/2}(x)$.
\end{lemma}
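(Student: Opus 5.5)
The plan is to reduce the SDPI constant for $\kappa_{1/2}$ to a singular value of a linear map on Hilbert–Schmidt space, and then use the fact that singular values multiply under tensor products.

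\textbf{Step 1 (linearize).} For $\kappa_{1/2}$ we have $\Omega_\sigma^{\kappa}(X)=\sigma^{-1/2}X\sigma^{-1/2}$. Hence
\begin{align*}
\divgchisq{\rho}{\sigma}=\big\|\sigma^{-1/4}(\rho-\sigma)\sigma^{-1/4}\big\|_2^2 .
\end{align*}
Write $X=\rho-\sigma$, a traceless Hermitian matrix. Set $Y=\sigma^{-1/4}X\sigma^{-1/4}$; this ranges over Hermitian $Y$ with $\tr(\sigma^{1/2}Y)=0$, i.e.\ Hilbert–Schmidt orthogonal to $\sqrt\sigma$. Define
\begin{align*}
\Phi(Y):=\ce(\sigma)^{-1/4}\,\ce\big(\sigma^{1/4}Y\sigma^{1/4}\big)\,\ce(\sigma)^{-1/4}.
\end{align*}
Two small points need checking here.
\begin{itemize}
\item Since the ratio is scale-invariant, the constraint that $\rho=\sigma+X$ be positive can be dropped: small perturbations suffice because $\sigma$ is full rank.
\item The restriction to Hermitian $Y$ is harmless, because $\Phi$ preserves Hermiticity. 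Its real and imaginary parts therefore decouple and give the same operator norm.
\end{itemize}
With these, $\eta_{\kappa_{1/2}}(\ce,\sigma)=\|\Phi|_{\sqrt\sigma^\perp}\|^2$, the square of the largest singular value of $\Phi$ on $\sqrt\sigma^{\perp}$.

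\textbf{Step 2 (spectral structure).} Trace preservation gives $\Phi(\sqrt\sigma)=\sqrt{\ce(\sigma)}$. I will also check that $\Phi^\dagger(\sqrt{\ce(\sigma)})=\sqrt\sigma$, using $\ce^\dagger(\unit)=\unit$. Consequently $\Phi$ maps $\sqrt\sigma^\perp$ into $\sqrt{\ce(\sigma)}^\perp$, and $\sqrt\sigma$ is a singular vector with singular value $1$. So the SVD of $\Phi$ splits as the trivial pair $(\sqrt\sigma,\sqrt{\ce(\sigma)})$ with $s_0=1$, plus singular values $s_1\ge s_2\ge\dots$ on the complement. This gives $\eta=s_1^2$, and the DPI gives $s_1\le 1$.

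\textbf{Step 3 (tensorize).} For the product channel, $\Phi_{12}=\Phi_1\otimes\Phi_2$, because all the square and fourth roots factor over tensor products. The singular values of $\Phi_1\otimes\Phi_2$ are the products $s^{(1)}_is^{(2)}_j$, and the trivial one $s_0^{(1)}s_0^{(2)}=1$ corresponds to $\sqrt{\sigma_1\otimes\sigma_2}$. The largest remaining singular value is $\max\{s^{(1)}_1 s^{(2)}_0,\ s^{(1)}_0 s^{(2)}_1,\ s^{(1)}_1s^{(2)}_1\}=\max\{s^{(1)}_1,s^{(2)}_1\}$, since all $s\le1$. Squaring gives \eqref{eqn::tensor}.

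\textbf{Main obstacle.} The step needing most care is Step 2: verifying that $\sqrt\sigma$ is an exact singular pair, so that the orthogonal complement is invariant in the SVD sense. This is what makes the product-SVD bookkeeping in Step 3 valid. It is exactly where the symmetric choice $\kappa_{1/2}$ is used. For general $\kappa$, $\Omega_\sigma^\kappa$ does not factor as $A^\dagger A$ with $A$ multiplicative under tensor products in a way compatible with $\ce$. Otherwise the computation is routine.
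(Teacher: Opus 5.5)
The paper does not prove this lemma; it only cites \cite{cao_tensorization_2019}. Your argument is correct and is essentially the standard singular-value argument behind that citation: $\eta_{\kappa_{1/2}}(\ce,\sigma)$ is the squared largest singular value of $\Phi$ on $\sqrt{\sigma}^{\perp}$, and $\Phi_{12}=\Phi_1\otimes\Phi_2$ makes these singular values multiply.

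One correction to your commentary: Step~2 is not where $\kappa_{1/2}$ is used. For every $\kappa\in\mathcal{K}$, the analogous map $(\Omega^{\kappa}_{\ce(\sigma)})^{1/2}\circ\ce\circ(\Omega^{\kappa}_{\sigma})^{-1/2}$ has the same trivial singular pair $(\sqrt{\sigma},\sqrt{\ce(\sigma)})$, using only $\kappa(1)=1$ and $\ce^\dagger(\unit)=\unit$. The hypothesis enters only in Step~3. There the factorization $\Phi_{12}=\Phi_1\otimes\Phi_2$ needs $\Omega^{\kappa}_{\sigma_1\otimes\sigma_2}=\Omega^{\kappa}_{\sigma_1}\otimes\Omega^{\kappa}_{\sigma_2}$, i.e.\ $\kappa(xy)=\kappa(x)\kappa(y)$. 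Within $\mathcal{K}$ this holds only for $\kappa_{1/2}(x)=x^{-1/2}$, and it fails for $\kappa_0$, where the paper shows tensorization itself breaks.
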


A natural question is whether a similar result would hold for any $\kappa\in \mathcal{K}$ and general quantum channels, which was left open in \cite{cao_tensorization_2019}.
In the following, we provide a negative result for the general case, which indicates that the  tensorization property of SDPI for the classical case is not generally transferable to the quantum regime. The counter-example was found near the endpoint of mean-alpha family. Since all reference states are full-rank, namely, $x>0$,
\begin{align*}
\kappa_{0}(x) = \frac{1}{2} \big(1 + x^{-1}\big),
\end{align*}
is well-defined as $\alpha\to 0$.

\begin{proposition}
\label{prop::quantum_chi_square_example}
The tensorization property does not hold for a generic $\kappa\in \mathcal{K}$ and general quantum channels. More specifically, for $\kappa_0$ there exists a channel $\ce$ and full-rank  state $\sigma$ such that 
\begin{align*}
\eta_{\kappa_0} (\ce\otimes \ce, \sigma \otimes \sigma) > \eta_{\kappa_0} (\ce, \sigma).
\end{align*}
A particular example is to choose $\ce$ as amplitude-damping channel, and $\sigma$ being the maximally mixed state.
\end{proposition}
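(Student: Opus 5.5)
The plan is to exploit the fact that for a maximally mixed input reference every $\Omega^\kappa$ becomes trivial, so the whole $\kappa$-dependence sits in the output reference $\ce(\pi_N)$. That reference is non-trivial exactly because the amplitude-damping channel is non-unital.

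\emph{Reduction to a quadratic-form problem.} Since $L_{\pi_N}R_{\pi_N}^{-1}$ is the identity superoperator and $\kappa(1)=1$, we get $\Omega^{\kappa}_{\pi_N}=N\,\mathrm{id}$. Hence $\divgchisq{\rho}{\pi_N}=N\tr\big((\rho-\pi_N)^2\big)$ for every $\kappa\in\mathcal K$. Writing $\rho-\pi_N=A$ with $A$ traceless Hermitian, and using homogeneity (small multiples of any traceless $A$ give states), the SDPI constant is
\begin{align*}
\eta_{\kappa}(\ce,\pi_N)=\sup_{A=A^\dagger,\ \tr A=0,\ A\neq 0}\frac{\tr\big(\ce(A)\,\Omega^{\kappa}_{\ce(\pi_N)}\ce(A)\big)}{N\tr(A^2)},
\end{align*}
which is the top eigenvalue of a finite Hermitian matrix. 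For $\kappa_0$ we have $\Omega^{\kappa_0}_\tau=\tfrac12(L_\tau^{-1}+R_\tau^{-1})$. If $\tau=\sum_i p_i\ketbra{i}$ is diagonal, this acts on matrix units by $\ket{i}\!\bra{j}\mapsto \tfrac12(p_i^{-1}+p_j^{-1})\ket{i}\!\bra{j}$.

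\emph{Single copy.} Take $\ce$ to be amplitude damping with parameter $\gamma\in(0,1)$. Then $\ce(\pi_2)=\mathrm{diag}(p,q)$ with $p=\frac{1+\gamma}{2}$ and $q=\frac{1-\gamma}{2}$. The channel acts diagonally on the Pauli basis: $\ce(Z)=(1-\gamma)Z$, $\ce(X)=\sqrt{1-\gamma}\,X$, $\ce(Y)=\sqrt{1-\gamma}\,Y$. The off-diagonal sector does not mix with $Z$, and $X,Y$ give the same Rayleigh quotient. Computing these gives
\begin{align*}
\eta_{\kappa_0}(\ce,\pi_2)=\max\Big\{\tfrac{(1-\gamma)^2}{4}\big(p^{-1}+q^{-1}\big),\ \tfrac{1-\gamma}{4}\big(p^{-1}+q^{-1}\big)\Big\}=\frac{1-\gamma}{1-\gamma^2}=\frac{1}{1+\gamma}.
\end{align*}
Note that $p^{-1}+q^{-1}=4/(1-\gamma^2)$.

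\emph{Two copies.} Now $\ce(\pi_2)^{\otimes2}$ is diagonal with eigenvalues $p^2,pq,qp,q^2$, and $\Omega^{\kappa_0}$ still acts diagonally on matrix units $\ket{ab}\!\bra{cd}$ with weight $\tfrac12\big((p_ap_b)^{-1}+(p_cp_d)^{-1}\big)$. The key point is that $\Omega^{\kappa_0}$ for a product state is \emph{not} the tensor product of the single-copy $\Omega$'s. A "fully off-diagonal" coherence such as $\ket{00}\!\bra{11}$ gets weight $\tfrac12(p^{-2}+q^{-2})$. This can be much larger than the square of the single-copy weight $\tfrac12(p^{-1}+q^{-1})$, and the gap grows as $q\to 0$. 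So I would test $A=\ket{00}\!\bra{11}+\ket{11}\!\bra{00}$, i.e.\ $\tfrac12(X\otimes X-Y\otimes Y)$. Then $\ce^{\otimes 2}(A)=(1-\gamma)A$, because the coherence factor is $\sqrt{1-\gamma}$ per qubit, and the corner coherence is not fed by damping from other entries. The Rayleigh quotient is
\begin{align*}
\frac{(1-\gamma)^2\cdot 2\cdot\tfrac12(p^{-2}+q^{-2})}{4\cdot 2}=\frac{(1-\gamma)^2}{8}\Big(\tfrac{4}{(1+\gamma)^2}+\tfrac{4}{(1-\gamma)^2}\Big)=\frac12\Big(\tfrac{(1-\gamma)^2}{(1+\gamma)^2}+1\Big).
\end{align*}
Comparing with $1/(1+\gamma)$ amounts to checking $(1+\gamma)^2+(1-\gamma)^2>2(1+\gamma)$, i.e.\ $\gamma^2>\gamma$. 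This fails for $\gamma\in(0,1)$, so this particular test vector is not enough.

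\emph{Main obstacle and fallback.} The hard step is finding the right two-copy test operator. My next attempt would be to combine a coherence with population shifts, i.e.\ to optimize over the sector of $4\times4$ traceless Hermitian matrices invariant under the phase symmetry $\ket{ab}\mapsto e^{i\theta(a+b)}\ket{ab}$, since $\ce^{\otimes2}$ and $\Omega$ respect this grading. Mixed coherences like $\ket{01}\!\bra{11}$ have output weights $\tfrac12((pq)^{-1}+q^{-2})$, which exceed the single-copy products. Moreover, amplitude damping maps $\ket{01}\!\bra{11}$-type and $\ket{00}\!\bra{10}$-type coherences into each other, so these couplings can raise the top eigenvalue.

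I would compute the top eigenvalue of the (small, explicit) block matrix in each charge sector as a function of $\gamma$. Then I would exhibit one $\gamma$, likely close to $1$, where the largest block eigenvalue exceeds $1/(1+\gamma)$; a single explicit $\gamma$ and eigenvector suffice for the strict inequality. If all sectors fail for $\kappa_0$ at $\sigma=\pi_2$, I would re-examine the single-copy computation (in particular whether the $Z$-sector with $\ce(\pi)$-shift attains more than the Pauli-diagonal estimate).
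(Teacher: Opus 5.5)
Your single-copy value $\frac{1}{1+\gamma}$ is correct, and so is your finding that $\ket{00}\bra{11}+\ket{11}\bra{00}$ gives no violation at $\pi_2$. The gap is that the proposal never reaches the strict inequality, and the fallback you outline cannot succeed. Carry out your own charge-sector decomposition at $\sigma=\pi_2$. For $\kappa_0$ the output form is $\tr(K^2T^{-1})=\sum_{a,b}\abs{K_{ab}}^2T_{aa}^{-1}$, which is diagonal in matrix units, so the sectors decouple. Writing $r=\frac{1-\gamma}{1+\gamma}$:

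The charge-$\pm2$ block gives $\frac12(1+r^2)$.

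The charge-$0$ block gives $r$. Its diagonal part is the classical $\chi^2$ problem for two copies of the damping kernel with uniform input, which tensorizes to $r$; the coherence $\ket{01}\bra{10}$ also gives $r$.

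Each charge-$\pm1$ block, e.g.\ $H=h_1\ket{00}\bra{01}+h_4\ket{10}\bra{11}+\text{h.c.}$, has Rayleigh quotient
\[
\frac{1}{(1+\gamma)^2}\,\frac{h_1^2+2\gamma h_1h_4+h_4^2}{h_1^2+h_4^2}\le\frac{1}{1+\gamma},
\]
with equality at $h_1=h_4$, i.e.\ at the local operator $\unit\otimes X$.

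Since $\frac12(1+r^2)<\frac12(1+r)=\frac{1}{1+\gamma}$, you get $\eta_{\kappa_0}(\ce_\gamma\otimes\ce_\gamma,\pi_2\otimes\pi_2)=\frac{1}{1+\gamma}=\eta_{\kappa_0}(\ce_\gamma,\pi_2)$ for every $\gamma\in(0,1)$. With the maximally mixed reference there is no violation at all. No choice of $\gamma$ or eigenvector will rescue the plan, and re-checking the single-copy value will not help.

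The missing idea is to bias the reference state, which is what the paper does. Take $\sigma_s=\frac12(\unit+sZ)$, so that $\ce_\gamma(\sigma_s)=\sigma_{f(s)}$ with $f(s)=\gamma+s-\gamma s$. For $\kappa_0$ one has $\tr(H\Omega^{\kappa_0}_\sigma(H))=\tr(H^2\sigma^{-1})$ for every $\sigma$. Every traceless qubit $H$ satisfies $H^2\propto\unit$, so the single-copy constant is $\frac{(1-s^2)(1-\gamma)}{1-f(s)^2}$.

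Your test operator, in the form $\widetilde H=X\otimes X-Y\otimes Y$ with $\widetilde H^2=2(\unit\otimes\unit+Z\otimes Z)$, then gives $(1-\gamma)^2\frac{(1+f^2)(1-s^2)^2}{(1+s^2)(1-f^2)^2}$. This exceeds the single-copy constant iff $1-2s-\gamma+(\gamma-1)s^2<0$, i.e.\ $s>\frac{\sqrt{1+(1-\gamma)^2}-1}{1-\gamma}$. The choice $s=1/2$ works for every $\gamma\in(0,1)$. The threshold is strictly positive, so $s=0$ is excluded, which is consistent with your computation.

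Consequently, the statement's clause naming the maximally mixed state is not what the paper's argument establishes. By the computation above it is in fact false for amplitude damping. The existence claim itself does hold, with $\sigma_{1/2}$ as the reference state.
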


As an immediate corollary, due to the clear continuity of $\kappa_{\alpha}$ in $\alpha$, as well as $\eta_\kappa$, one has:
\begin{corollary}
There exists a neighborhood $[0, \alpha^\star)$ such that the tensorization fails for every $\alpha\in [0, \alpha^\star)$.
\end{corollary}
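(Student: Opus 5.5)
The plan is to deduce the corollary from Proposition~\ref{prop::quantum_chi_square_example} by a continuity argument in $\alpha$. Fix the amplitude-damping channel $\ce$ (damping parameter $\gamma\in(0,1)$) and $\sigma=\pi_2$, chosen so that the proposition gives a strict gap
\begin{align*}
\delta := \eta_{\kappa_0}(\ce\otimes\ce,\sigma\otimes\sigma)-\eta_{\kappa_0}(\ce,\sigma) > 0 .
\end{align*}
It then suffices to show that both maps $\alpha\mapsto\eta_{\kappa_\alpha}(\ce,\sigma)$ and $\alpha\mapsto\eta_{\kappa_\alpha}(\ce\otimes\ce,\sigma\otimes\sigma)$ are continuous on $[0,1/2]$. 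Once that is established, there is an $\alpha^\star>0$ such that each map stays within $\delta/3$ of its value at $\alpha=0$ for all $\alpha\in[0,\alpha^\star)$. Consequently the strict inequality $\eta_{\kappa_\alpha}(\ce\otimes\ce,\sigma\otimes\sigma)>\eta_{\kappa_\alpha}(\ce,\sigma)=\max\{\eta_{\kappa_\alpha}(\ce,\sigma),\eta_{\kappa_\alpha}(\ce,\sigma)\}$ persists on this interval, so \eqref{eqn::tensor} fails for every such $\alpha$. We also note that $\kappa_\alpha\in\mathcal{K}$ for $\alpha\in[0,1/2]$, so each of these $\eta_{\kappa_\alpha}$ is indeed a SDPI constant of a divergence in the family.

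To get continuity, I would first rewrite $\eta_\kappa(\ce,\sigma)$ as a generalized Rayleigh quotient. Since $\divgchisq{\rho}{\sigma}$ is a quadratic form in $\rho-\sigma$, the ratio defining $\eta_\kappa$ depends only on the direction of $X=\rho-\sigma$ and not on its size. Because $\sigma$ is full rank, every traceless Hermitian direction $X$ is realized by some $\rho=\sigma+tX$ with $t$ small. Hence
\begin{align*}
\eta_\kappa(\ce,\sigma)=\max_{X=X^\dagger,\ \tr X=0,\ \|X\|_2=1}\frac{\tr\big(\ce(X)\,\Omega^{\kappa}_{\ce(\sigma)}\ce(X)\big)}{\tr\big(X\,\Omega^{\kappa}_{\sigma}X\big)},
\end{align*}
and the analogous formula holds for $\ce\otimes\ce$ and $\sigma\otimes\sigma$. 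For this formula to make sense we need $\ce(\sigma)$ to be full rank. For amplitude damping this holds: $\ce(\pi_2)=\mathrm{diag}\big((1+\gamma)/2,(1-\gamma)/2\big)$ is full rank when $\gamma<1$, and then $\ce(\sigma)^{\otimes 2}$ is also full rank.

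The main step is uniform control of $\Omega^{\kappa_\alpha}_\tau$ for $\tau\in\{\sigma,\ce(\sigma),\sigma^{\otimes2},\ce(\sigma)^{\otimes2}\}$. In the eigenbasis of $\tau$, with eigenvalues $\lambda_i>0$, the superoperator $\Omega^{\kappa_\alpha}_\tau$ acts on matrix units $|i\rangle\langle j|$ by multiplication by $\lambda_j^{-1}\kappa_\alpha(\lambda_i/\lambda_j)$. These are finitely many numbers, each jointly continuous in $\alpha\in[0,1/2]$. Moreover, by the arithmetic–geometric mean inequality, $\kappa_\alpha(x)\ge x^{-1/2}>0$, so each coefficient is bounded below by $(\lambda_i\lambda_j)^{-1/2}$, uniformly in $\alpha$.

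It follows that the numerator and denominator above are continuous in $(\alpha,X)$ on the compact set $[0,1/2]\times\{\text{unit traceless Hermitian }X\}$, and the denominator is bounded below by a positive constant there. The quotient is therefore uniformly continuous on this compact set, and its maximum over $X$ is continuous in $\alpha$. This is the step I expect to need the most care: the key point is the uniform positive lower bound on the denominator, which rules out any blow-up of the quotient as $\alpha\to0$. With that in hand, the gap argument of the first paragraph completes the proof.
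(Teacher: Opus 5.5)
The continuity argument is the paper's route and is correct, but your base point $\sigma=\pi_2$ has no strict gap, so $\delta=0$ and the argument starts from an equality. The paper only asserts continuity of $\alpha\mapsto\eta_{\kappa_\alpha}$; your justification of it is sound. That justification uses the multiplier form $\lambda_j^{-1}\kappa_\alpha(\lambda_i/\lambda_j)$ of $\Omega^{\kappa_\alpha}_\tau$, the uniform bound $\kappa_\alpha(x)\ge x^{-1/2}$, and compactness of the unit sphere of traceless Hermitian matrices.

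\textbf{Why $\pi_2$ fails.} You chose $\sigma=\pi_2$ following the last sentence of \propref{prop::quantum_chi_square_example}. The paper's proof only establishes \eqref{eqn::break} for $\sigma_s=\frac12(\unit+sZ)$ with $s>\frac{\sqrt{1+(1-\gamma)^2}-1}{1-\gamma}>0$, and at $s=0$ there is genuinely no gap.
\begin{itemize}
\item At $s=0$ the single-copy constant is $\frac{1-\gamma}{1-\gamma^2}=\frac{1}{1+\gamma}$.
\item The witness $\widetilde H=X\otimes X-Y\otimes Y$ only gives $\frac{1+\gamma^2}{(1+\gamma)^2}<\frac{1}{1+\gamma}$.
\item No other witness does better. $\ce_\gamma$ is covariant under $e^{i\theta Z}$, and $\pi_2$, $\ce_\gamma(\pi_2)$ are diagonal. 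So the two-copy quotient splits over the sectors spanned by $|ab\rangle\langle a'b'|$ with fixed $(a-a',b-b')$.
\item The sector maxima are:
\begin{itemize}
\item $\frac{1-\gamma}{1+\gamma}$ on the diagonal sector, by classical $\chi^2$ tensorization;
\item $\frac{1}{1+\gamma}$ on the sectors $(0,\pm1)$ and $(\pm1,0)$, attained at $\unit\otimes X$ and $X\otimes\unit$;
\item $\frac{1+\gamma^2}{(1+\gamma)^2}$ and $\frac{1-\gamma}{1+\gamma}$ on the sectors $\pm(1,1)$ and $\pm(1,-1)$.
\end{itemize}
\end{itemize}
Hence $\eta_{\kappa_0}(\ce_\gamma\otimes\ce_\gamma,\pi_2\otimes\pi_2)=\eta_{\kappa_0}(\ce_\gamma,\pi_2)$, and the perturbation argument has nothing to perturb.

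\textbf{Repair.} Take $\sigma=\sigma_{1/2}$. The paper's computation gives a strict gap there for every $\gamma\in(0,1)$. Since $\ce_\gamma(\sigma_{1/2})=\sigma_{(1+\gamma)/2}$ is still full rank, the rest of your argument applies verbatim to $\sigma_{1/2}$ and $\sigma_{1/2}\otimes\sigma_{1/2}$.
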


As a remark, this $\kappa_0 \ge \kappa_{1/2}$ satisfies the second condition in \cite[Theorem 1]{cao_tensorization_2019}. This explains the necessity to restrict channels $\ce$ to a special class in \cite{cao_tensorization_2019}.
The remainder of this section is devoted to the proof of \propref{prop::quantum_chi_square_example}.

\subsection{Counter-example}
The counter-example could be constructed for a qubit system. Since the chi-square divergence has a quadratic form, it is easy to show that
\begin{align*}
\eta_{\kappa}(\ce,\sigma) = \sup_{H = H^\dagger, \tr(H) = 0, H \neq 0} \frac{\tr\big( \ce(H) \Omega_{\ce(\sigma)}^{\kappa} \ce(H)\big)}{\tr\big(H \Omega_{\sigma}^{\kappa} (H)\big)}.
\end{align*}
Let us pick $\kappa = \kappa_0$, so that 
\begin{align*}
\eta_{\kappa_0}(\ce,\sigma) = \sup_{H = H^\dagger, \tr(H) = 0, H \neq 0} \frac{\tr\big( \ce(H)^2 \ce(\sigma)^{-1}\big) }{\tr\big( (H)^2 (\sigma)^{-1}\big)}.
\end{align*}
Choose $\sigma = \sigma_s = \frac{1}{2} \big( \unit + s Z)$, $s\in (-1,1)$, and let $\ce = \ce_{\gamma}$ be the amplitude-damping channel with Kraus operators
\begin{align*}
\ce_\gamma(\cdot) = K_0 (\cdot) K_0^\dagger + K_1 (\cdot) K_1^\dagger, \qquad K_0 = \begin{bmatrix} 1 & 0 \\ 0 & \sqrt{1-\gamma}\end{bmatrix} \qquad K_1 = \begin{bmatrix} 0 & \sqrt{\gamma} \\ 0 & 0 \end{bmatrix}, \qquad \gamma\in (0,1).
\end{align*}
Then it could be readily verified that by parameterizing $H = r_x X + r_y Y + r_z Z$, 
\begin{align*}
\ce_{\gamma}(\sigma_s) &=\  \sigma_{f(s)},\qquad f(s) = \gamma + s - \gamma s\\
\tr\big( H^2 \sigma_s^{-1}\big) &= \big(r_x^2 + r_y^2 + r_z^2\big) \tr(\sigma_s^{-1} ) =  \big(r_x^2 + r_y^2 + r_z^2\big) \frac{4}{1-s^2} \\
\ce_\gamma(X) &=  \sqrt{1-\gamma} X, \qquad \ce_\gamma(Y) = \sqrt{1-\gamma} Y \qquad \ce_\gamma(Z) = (1-\gamma) Z \\
\tr\big( \ce_\gamma (H)^2 \ce_\gamma(\sigma_s)^{-1}\big) &= \big((1-\gamma) (r_x^2 + r_y^2) + (1-\gamma)^2 r_z^2  \big) \frac{4}{1-f(s)^2}.
\end{align*}
Hence, we have
\begin{align*}
\eta_{\kappa_0}(\ce_\gamma,\sigma_s) =&\ \sup_{(r_x, r_y, r_z)\ \neq\ 0}\ \frac{\big((1-\gamma) (r_x^2 + r_y^2) + (1-\gamma)^2 r_z^2  \big) \frac{4}{1-f(s)^2}}{\big(r_x^2 + r_y^2 + r_z^2\big) \frac{4}{1-s^2}} \\
=&\ \frac{1-s^2}{1-f(s)^2} \sup_{(r_x, r_y, r_z)\ \neq\ 0}\ \frac{(1-\gamma) (r_x^2 + r_y^2) + (1-\gamma)^2 r_z^2 }{r_x^2 + r_y^2 + r_z^2} \\
=&\ \frac{1-s^2}{1-f(s)^2} \max\big\{1-\gamma, (1-\gamma)^2 \big\} \\
=&\ \frac{1-s^2}{1-f(s)^2} (1-\gamma).
\end{align*}
The last equality holds because we picked $\gamma\in (0,1)$.

Now, let us estimate $\eta_{\kappa_0}(\ce_\gamma\otimes \ce_{\gamma}, \sigma_s\otimes \sigma_s)$ and show that 
\begin{align}
\label{eqn::break}
\eta_{\kappa_0}(\ce_\gamma\otimes \ce_{\gamma}, \sigma_s\otimes \sigma_s) > \eta_{\kappa_0}(\ce_\gamma, \sigma_s),
\end{align}
when $s \in \big( \frac{\sqrt{1+(1-\gamma)^2}-1}{1-\gamma}, 1\big),\ \gamma\in(0,1)$.
This provides a sufficient condition for the tensorization to break down. 

Let us pick the Hermitian matrix $H$ for the bipartite case as
\begin{align*}
\widetilde{H} = X \otimes X - Y \otimes Y.
\end{align*}
By picking this special case, the SDPI constants for bipartite system can be lower bounded by
\begin{align*}
\eta_{\kappa_0}(\ce_\gamma\otimes \ce_{\gamma}, \sigma_s\otimes \sigma_s) \ge &\ \frac{\tr\Big( \big(\ce_\gamma\otimes \ce_{\gamma}(\widetilde{H})\big)^2 \ce_\gamma(\sigma_s)^{-1} \otimes \ce_\gamma(\sigma_s)^{-1}\Big)}{\tr\Big( \big(\widetilde{H}\big)^2 (\sigma_s)^{-1} \otimes (\sigma_s)^{-1}\Big)} \\
=&\ \frac{(1-\gamma)^2 \tr\big(\widetilde{H}^2 \ce_\gamma(\sigma_s)^{-1} \otimes \ce_\gamma(\sigma_s)^{-1}\big)}{\tr\big(\widetilde{H}^2 (\sigma_s)^{-1} \otimes (\sigma_s)^{-1}\big)} \\
=&\ \frac{(1-\gamma)^2 \tr\big((\unit \otimes \unit + Z \otimes Z)\ \ce_\gamma(\sigma_s)^{-1} \otimes \ce_\gamma(\sigma_s)^{-1}\big)}{\tr\big((\unit \otimes \unit + Z \otimes Z)\ (\sigma_s)^{-1} \otimes (\sigma_s)^{-1}\big)}\\
=&\ (1-\gamma)^2 \frac{(1+f(s)^2) (1-s^2)^2}{(1+s^2)(1-f(s)^2)^2}.
\end{align*}
To ensure \eqref{eqn::break}, we just need
\begin{align*}
(1-\gamma)^2 \frac{(1+f(s)^2) (1-s^2)^2}{(1+s^2)(1-f(s)^2)^2} > \frac{1-s^2}{1-f(s)^2} (1-\gamma),
\end{align*}
namely,
\begin{align*}
(1-\gamma) \frac{(1+f(s)^2) (1-s^2)}{(1+s^2)(1-f(s)^2)} > 1.
\end{align*}
This holds when
\begin{align*}
(s-1)^2 \big(1-2 s + s^2 (\gamma-1) - \gamma) (-1+\gamma) \gamma > 0.
\end{align*}
When $s\in (-1,1)$ (to ensure that $\sigma_s$ is full-rank), and $\gamma\in (0,1)$ in damping, the above holds if $1-2 s + s^2 (\gamma-1) - \gamma <0$, namely, 
\begin{align*}
s \in \big( \frac{\sqrt{1+(1-\gamma)^2}-1}{1-\gamma}, 1\big).
\end{align*}

In particular, if we pick $s = 1/2$, the counter-example is still valid for any $\gamma \in (0,1)$.

\section{Counter-example for quantum relative entropy}
\label{sec::kl}

The quantum relative entropy is also a fundamental quantity to describe the discrepancy between two quantum states in quantum information theory \cite{umegaki_conditional_1962,Petz_quantum_2008,wilde_quantum_2017}. For two quantum states $\rho \ll \sigma$, we can define it as 
\begin{align}
\label{eqn::qre}
D_{\kl}(\rho || \sigma) = \tr\big(\rho \log \rho \big) - \tr\big(\rho \log \sigma\big).
\end{align}
Besides, we shall denote the relevant SDPI constant as $\eta_{\kl}(\ce,\sigma)$ for a general quantum channel $\ce$ and reference state $\sigma$.

As already mentioned above, we cannot expect the tensorization property to generally hold: 
\begin{proposition}
\label{prop::kl}
The tensorization property does not hold for quantum relative entropy and general quantum channels. More specifically, there exist two channels $\cn, \overline{\cn}$ mapping states from Hilbert space $\hbt_{N+1}$ to $\hbt_N$  and a full-rank quantum state $\widehat{\sigma}_\eps \in \dm_{N+1}$ such that 
\begin{align*}
\eta_{\kl} (\cn\otimes \overline{\cn}, \widehat{\sigma}_\eps \otimes \widehat{\sigma}_\eps) > \eta_{\kl} (\cn, \widehat{\sigma}_\eps) \equiv \eta_{\kl}(\overline{\cn}, \widehat{\sigma}_\eps).
\end{align*}
\end{proposition}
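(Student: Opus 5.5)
The plan is to reduce both SDPI constants, in a suitable limit $\eps\to 0$, to minimum output entropies. Then I would invoke a known violation of additivity of the minimum output entropy \eqref{eqn::Hmin} for a conjugate pair of channels.

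\textbf{Construction.} Start from a unital channel $\mathcal{M}:\dm_N\to\dm_N$ for which
\begin{align*}
\vnentropymin(\mathcal{M}\otimes\overline{\mathcal{M}}) < 2\,\vnentropymin(\mathcal{M}),
\end{align*}
where $\overline{\mathcal{M}}$ is the complex conjugate channel. Such a channel exists by Hastings' random unitary construction, whose channels are unital. Embed $\hbt_N\subset\hbt_{N+1}$ and let $P$ be the projector onto $\hbt_N$. Define
\begin{align*}
\cn(\rho) = \mathcal{M}(P\rho P) + \tr\big((\unit-P)\rho\big)\,\pi_N,
\end{align*}
and define $\overline{\cn}$ in the same way from $\overline{\mathcal{M}}$. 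Take the reference state
\begin{align*}
\widehat{\sigma}_\eps = \eps\,\pi_N\oplus(1-\eps)\ketbra{N+1}.
\end{align*}
By unitality, $\cn(\widehat{\sigma}_\eps)=\overline{\cn}(\widehat{\sigma}_\eps)=\pi_N$. The single-system constants of $\cn$ and $\overline{\cn}$ coincide, because complex conjugation preserves $\widehat{\sigma}_\eps$ and relative entropy.

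\textbf{Lower bound for the product.} Take $\rho=\phi\otimes\phi'$ or an entangled pure state $\ket{\psi}$ supported on $\hbt_N\otimes\hbt_N$. For such a state,
\begin{align*}
\divgkl{\psi}{\widehat{\sigma}_\eps^{\otimes 2}} = 2\log(N/\eps),
\qquad
\divgkl{(\cn\otimes\overline{\cn})(\psi)}{\pi_N^{\otimes2}} = 2\log N-\vnentropy\big((\mathcal{M}\otimes\overline{\mathcal{M}})(\psi)\big).
\end{align*}
Choosing $\psi$ to be the minimiser of the output entropy gives
\begin{align*}
\eta_{\kl}(\cn\otimes\overline{\cn},\widehat{\sigma}_\eps^{\otimes2})\ \ge\ \frac{2\log N-\vnentropymin(\mathcal{M}\otimes\overline{\mathcal{M}})}{2\log(1/\eps)+2\log N}.
\end{align*}

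\textbf{Upper bound for the single channel (main obstacle).} I expect the hardest step to be the matching upper bound
\begin{align*}
\eta_{\kl}(\cn,\widehat{\sigma}_\eps)\ \le\ \frac{\log N-\vnentropymin(\mathcal{M})+o(1)}{\log(1/\eps)}
\end{align*}
as $\eps\to0$. The difficulty is that the supremum runs over all $\rho$, including states with weight on $\ket{N+1}$ and coherences across the two blocks.

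I would first write $\rho$ in block form, with weight $p=\tr(P\rho)$. The output $\cn(\rho)=p\,\mathcal{M}(\tilde\rho)+(1-p)\pi_N$ depends only on the diagonal blocks, so its divergence from $\pi_N$ is at most $p(\log N-\vnentropymin(\mathcal{M}))$ by joint convexity. For the input, pinching onto the two blocks decreases $\dvg_{\kl}$, and $\widehat{\sigma}_\eps$ is block diagonal. This gives
\begin{align*}
\divgkl{\rho}{\widehat{\sigma}_\eps}\ \ge\ p\log(1/\eps)+p\log p+(1-p)\log\tfrac{1-p}{1-\eps}-p\log N.
\end{align*}

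The delicate regime is $p\to0$ or $p\to 1$. Near $p=0$, relative entropy behaves like a chi-square quantity, and the ratio tends to a $\chi^2$-type constant that must also be shown to be $O(1/\log(1/\eps))$. I would handle this by a direct second-order expansion around $\widehat{\sigma}_\eps$: the input chi-square divergence carries a factor $1/\eps$, while the output chi-square divergence is only $O(1)$. This makes that regime negligible.

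\textbf{Conclusion.} Combining the two bounds and multiplying by $\log(1/\eps)$, the product constant beats the single-channel constant for small $\eps$ whenever
\begin{align*}
\log N-\tfrac12\vnentropymin(\mathcal{M}\otimes\overline{\mathcal{M}}) > \log N-\vnentropymin(\mathcal{M}).
\end{align*}
This is exactly the non-additivity inequality assumed in the construction, so the strict inequality in \propref{prop::kl} follows for all sufficiently small $\eps$.
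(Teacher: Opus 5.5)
Your construction, the conjugation symmetry, the product lower bound and the estimate for $p$ not too small all agree with the paper. Using the output-entropy minimiser instead of $\ket{\Psi_N}$ is fine, since it is a pure state on $\hbt_N\otimes\hbt_N$; the large-$p$ estimate is joint convexity plus pinching. The gap is the small-$p$ regime, which you flag but do not close.

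A second-order expansion around $\widehat{\sigma}_\eps$ controls only states in a shrinking neighbourhood of $\widehat{\sigma}_\eps$, and small $p$ does not put $\rho$ there. As $p\to 0$, $\rho\to\ketbra{N+1}$. For $\eps\ll p\ll 1$, the $\hbt_N$-block $p\tilde\rho$ is not a small relative perturbation of $\eps\pi_N$. Relative entropy is bounded \emph{above} by chi-square, not below, so a large input chi-square gives no lower bound on $\divgkl{\rho}{\widehat{\sigma}_\eps}$ outside that neighbourhood.

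Concretely, take $p=\sqrt{\eps}$, $\tilde\rho$ pure and no coherences. Then $\divgkl{\rho}{\widehat{\sigma}_\eps}\approx\tfrac12\sqrt{\eps}\log(1/\eps)$, while the input chi-square is about $N$. Your pinching-plus-convexity bound there only gives a ratio of about $2C/\log(1/\eps)$. That does not beat the product lower bound $\beta C/(\log(1/\eps)+\log N)$ unless $\beta>2$, and Hastings' violation gives only $\beta>1$. So a whole intermediate range of $p$ is covered by neither of your estimates.

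Note also that your input lower bound is $d(p\|\eps)-p\log N$: you replaced the term $p\,\divgkl{\tilde\rho}{\pi_N}\ge 0$ by $-p\log N$. At $p=\eps$ this equals $-\eps\log N<0$. The discarded term is exactly what is needed near $\widehat{\sigma}_\eps$.

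The missing ingredient is a global, non-perturbative bound for $p\le x^\star=\log(1/\eps)^{-2}$. Keep the exact pinched decomposition
$\divgkl{\rho}{\widehat{\sigma}_\eps}\ge d(p\|\eps)+p\,\divgkl{\tilde\rho}{\pi_N}\ge p\,\divgkl{\tilde\rho}{\pi_N}$.
Bound the output using $\dvg_{\kl}\le\dvg_2$ and then Pinsker:
$\divgkl{\pi_N+p(\mathcal{M}(\tilde\rho)-\pi_N)}{\pi_N}\le\log\big(1+Np^2\|\mathcal{M}(\tilde\rho)-\pi_N\|_2^2\big)\le 2Np^2\,\divgkl{\mathcal{M}(\tilde\rho)}{\pi_N}$.
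Data processing for $\mathcal{M}$ then gives the uniform ratio bound $2Np\le 2N/\log(1/\eps)^2=o(1/\log(1/\eps))$. This is the paper's Case 1. With it, and your large-$p$ bound for $p>x^\star$ (which does give $C/\log(1/\eps)\,(1+o(1))$ there), the argument closes.
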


The key ingredient of construction comes from the Hastings' construction of super-additivity \cite{hastings_superadditivity_2009}, and a technique to connect the entropy with SDPI constants. The remainder of this section is devoted to the proof.

\subsection{Hastings' existential statement}

One main ingredient comes from an existential statement in \cite{hastings_superadditivity_2009} about the super-additivity of the minimum output entropy $H^{\text{min}}(\mathcal{E})$, which is defined as 
\begin{align}
\label{eqn::Hmin}
\vnentropymin(\mathcal{E}) := \min_{\ket{\psi}} \vnentropy\Big(\mathcal{E}\big(\ketbra{\psi}\big)\Big) \equiv \min_{\rho\in \dm_N} \vnentropy(\mathcal{E}(\rho)), \qquad \vnentropy(\rho) := -\tr\big(\rho \log \rho\big).
\end{align}
The equality above easily comes from the concavity of von-Neumann entropy.

It has been proved in \cite[Theorem 1]{hastings_superadditivity_2009} that there exists a quantum channel $\ce$ and $\overline{\ce}$ in the following form
\begin{align}
\ce(\rho) &= \sum_{i=1}^D p_i U_i \rho U_i^\dagger, \qquad 
\overline{\ce}(\rho) = \sum_{i=1}^D p_i \overline{U}_i \rho \overline{U}_i^\dagger,
\end{align}
where $\big\{U_i\big\}_{i=1}^D$ are unitary matrices, and $\big\{p_i\big\}_{i=1}^D$ is a discrete probability distribution, and $\overline{U}$ means the complex-conjugate of $U$, such that the minimum output entropy is non-additive, 
\begin{align*}
\vnentropymin(\ce\otimes \overline{\ce}) < \vnentropymin(\ce) + \vnentropymin(\overline{\ce}) = 2 \vnentropymin(\ce),
\end{align*}
and in particular, one can show that
\begin{align}
\label{eqn::hastings}
\vnentropy\big(\ce\otimes \overline{\ce}(\ketbra{\Psi_{N}})\big) <  2 \vnentropymin(\ce),
\end{align}
where $\Psi_{N}$ is the maximally entangled state on the bipartite system $\hbt_N\otimes \hbt_N$. By the equivalence theorem in \cite{shor_equivalence_2004}, this non-additivity immediately suggests three other types of additivity/strong superadditivity to break down.

\begin{lemma}
\label{lem::hastings_coro}
The above constructed channel satisfies 
\begin{align*}
\divgklbig{\ce\otimes \overline{\ce} (\ketbra{\Psi_N})}{\pi_N \otimes \pi_N} > 2 \sup_{\rho\in \dm_N} \divgklbig{\ce(\rho)}{\pi_N}.
\end{align*}
\end{lemma}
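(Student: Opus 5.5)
The plan is to convert relative entropy with respect to the maximally mixed state into von Neumann entropy, and then read the claim off from Hastings' inequality \eqref{eqn::hastings}. For any $\tau\in\dm_M$ we have $-\log\pi_M = (\log M)\,\unit$, so
\begin{align*}
\divgkl{\tau}{\pi_M} = \tr(\tau\log\tau) + \log M = \log M - \vnentropy(\tau).
\end{align*}
Both sides of the lemma are therefore entropy deficits. The channels only need to map into $\dm_N$, and $\pi_N\otimes\pi_N = \pi_{N^2}$.

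For the left-hand side, apply this identity with $M=N^2$ and $\tau = \ce\otimes\overline{\ce}(\ketbra{\Psi_N})$. This gives
\begin{align*}
\divgklbig{\ce\otimes \overline{\ce} (\ketbra{\Psi_N})}{\pi_N \otimes \pi_N} = 2\log N - \vnentropy\big(\ce\otimes\overline{\ce}(\ketbra{\Psi_N})\big).
\end{align*}

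For the right-hand side, apply the identity with $M=N$ and $\tau=\ce(\rho)$. Taking the supremum over $\rho\in\dm_N$ gives
\begin{align*}
\sup_{\rho}\divgklbig{\ce(\rho)}{\pi_N} = \log N - \min_{\rho\in\dm_N}\vnentropy(\ce(\rho)) = \log N - \vnentropymin(\ce),
\end{align*}
using the second expression in the definition \eqref{eqn::Hmin}. The supremum is attained because $\dm_N$ is compact and the map is continuous. Hence twice the right-hand side equals $2\log N - 2\vnentropymin(\ce)$.

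The lemma is therefore equivalent to $\vnentropy\big(\ce\otimes\overline{\ce}(\ketbra{\Psi_N})\big) < 2\vnentropymin(\ce)$, which is exactly \eqref{eqn::hastings}. No real obstacle remains once Hastings' result is assumed. The only points to check are that the reference state on the bipartite system really is $\pi_{N^2}$, so the constant is $2\log N$, and that the supremum matches the minimum output entropy as defined in \eqref{eqn::Hmin}, including mixed inputs; concavity of the entropy takes care of the latter.
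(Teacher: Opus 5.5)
Your proposal is correct and follows the paper's proof. It rewrites both relative entropies against the maximally mixed state as the entropy deficits $2\log N - \vnentropy(\cdot)$ and $\log N - \vnentropy(\cdot)$, identifies the supremum with $\log N - \vnentropymin(\ce)$ via \eqref{eqn::Hmin}, and concludes directly from Hastings' inequality \eqref{eqn::hastings}.
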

For convenience, let us denote
\begin{align}
\beta &= \frac{\divgklbig{\ce\otimes \overline{\ce} (\ketbra{\Psi_N})}{\pi_N \otimes \pi_N}}{2 \sup_{\rho\in \dm_N} \divgklbig{\ce(\rho)}{\pi_N}} > 1, \label{eqn::beta}\\
C &= \sup_{\rho\in \dm_N} \divgklbig{\ce(\rho)}{\pi_N}\label{eqn::C}.
\end{align}

\begin{proof}
By definition,
\begin{align*}
\divgklbig{\ce\otimes \overline{\ce} (\ketbra{\Psi_N})}{\pi_N \otimes \pi_N} =&\ \log (N^2) - \vnentropy\big(\ce\otimes \overline{\ce} (\ketbra{\Psi_N}\big) \\
\mygt{\eqref{eqn::hastings}} &\ 2\log (N) - 2 \vnentropymin(\mathcal{E}) \\
\myeq{\eqref{eqn::Hmin}}&\ \sup_{\rho\in \dm_N} 2\log (N) - 2 \vnentropy(\mathcal{E}(\rho)) \\
=&\ 2 \sup_{\rho\in \dm_N} \divgklbig{\ce(\rho)}{\pi_N}.
\end{align*}
\end{proof}

Besides, it can be easily verified that the above channels in the form of linear combination of unitaries have
\begin{align}
\label{eqn::cepi}
\ce (\pi_N) = \overline{\ce}(\pi_N) = \pi_N.
\end{align}

\subsection{Counter-example}

The counter-example is constructed for a channel mapping states from the Hilbert space $\hbt_{N+1}$ to $\hbt_N$, based on the above constructed example $\ce$; the key challenge is to connect the SDPI constants with the above minimum output entropy in a certain limit. The counter-example and proof below was constructed with additional inspiration from \cite{caputo_entropy_2024,hirche_contraction_2022}.

For any density matrix on $\hbt_{N+1}$, one can decompose it in terms of 
\begin{align*}
\widehat{\rho} = \begin{bmatrix} \varrho & v \\ v^\dagger & \rho \end{bmatrix} = \varrho \ketbra{0} + \sum_{k=1}^{N} \big(v_{k} \ketbra{0}{k} + c.c.\big) + \sum_{j, k = 1}^N  \ketbra{j} \rho\ketbra{k} \in \dm_{N+1},
\end{align*}
where $\varrho \in [0,1]$, $\rho$ is positive semi-definite, and $\text{c.c.}$ means complex conjugate of the previous term. 
To simplify notations, we shall use $\ket{0}$ as the basis for the extra dimension, and $\ket{j}$ ($j = 1, 2, \cdots, N$) for the basis on $\hbt_N$. In the following of this section, we shall denote the density matrix in $\dm_{N+1}$ using ``hat" notation, whereas the component for Hilbert space $\hbt_N$ uses the regular math symbol for clarity.

Then one defines 
\begin{align}
\cn(\widehat{\rho}) := \varrho \pi_N + \ce(\rho).
\end{align}
This mapping $\cn$ is clearly a quantum channel as it can be written as following Kraus form:
\begin{align*}
\cn\big(\widehat{\rho}\big) = \sum_{j=1}^N \frac{1}{N} \ket{j} \bra{0} \widehat{\rho} \ket{0} \bra{j} + \sum_{j=1}^{D} p_j U_j \big(P \widehat{\rho} P^\dagger\big) U_j^\dagger,\qquad P (\widehat{\rho}) = \sum_{j=1}^{N} \ketbra{j} \widehat{\rho}.
\end{align*}
Similarly, 
\begin{align*}
\overline{\cn}(\widehat{\rho}) := \varrho \pi_N + \overline{\ce}(\rho),
\end{align*}
is also a quantum channel. As for the reference state, one chooses
\begin{align*}
\widehat{\sigma}_\epsilon := \begin{bmatrix} 1-\epsilon & 0 \\ 0 & \epsilon \pi_N\end{bmatrix} = (1-\eps) \ketbra{0} + \frac{\eps}{N}\sum_{k=1}^N \ketbra{k}.
\end{align*}
It could be easily verified that 
\begin{align}
\label{eqn::N_sigma}
\cn(\widehat{\sigma}_\epsilon) = \overline{\cn}(\widehat{\sigma}_\epsilon) \myeq{\eqref{eqn::cepi}} \pi_N.
\end{align}
Next, we switch to the key connection: pick $\widehat{\rho}_{AB} = \frac{1}{N} \sum_{j, k=1}^N \ket{j j} \bra{kk}$, and 
\begin{align*}
\eta_{\kl} (\cn\otimes \overline{\cn}, \widehat{\sigma}_\eps \otimes \widehat{\sigma}_\eps) \ge&\  \frac{\divgklbig{\cn\otimes \overline{\cn}(\widehat{\rho}_{AB})}{\cn\otimes \overline{\cn}(\widehat{\sigma}_\eps\otimes \widehat{\sigma}_\eps)}}{\divgklbig{\widehat{\rho}_{AB}}{\widehat{\sigma}_\eps\otimes \widehat{\sigma}_\eps}} \\
=&\ \frac{\divgklbig{\ce\otimes \overline{\ce}(\ketbra{\Psi_N})}{\pi_N\otimes \pi_N}}{2\log(N) + 2 \log (1/\eps)} \\
\myeq{\eqref{eqn::beta},\eqref{eqn::C}}&\ \beta \frac{C}{\log(N) + \log (1/\eps)}.
\end{align*}

In the limit of small $\eps$, one has 
\begin{lemma}
\label{lem::eta_kl}
\begin{align*}
\eta_{\kl}(\cn, \widehat{\sigma}_\eps) = \eta_{\kl}(\overline{\cn}, \widehat{\sigma}_\eps), \qquad \limsup_{\eps\to 0} \frac{\eta_{\kl}(\cn, \widehat{\sigma}_\eps)}{C/ \log(1/\eps)} \le 1.
\end{align*}
\end{lemma}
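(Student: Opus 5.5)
The equality $\eta_{\kl}(\cn,\widehat{\sigma}_\eps)=\eta_{\kl}(\overline{\cn},\widehat{\sigma}_\eps)$ should follow from complex conjugation symmetry. Let $T(\widehat\rho)=\overline{\widehat\rho}$ be entrywise conjugation in the fixed basis $\{\ket{0},\ket{1},\dots,\ket{N}\}$. Then $T$ is a bijection of $\dm_{N+1}$ onto itself, and likewise on $\dm_N$. We have $\overline{\cn}(\overline{\widehat\rho})=\overline{\cn(\widehat\rho)}$, and $\widehat{\sigma}_\eps$ and $\pi_N$ are real. Relative entropy is invariant under simultaneous conjugation, since $\tr(\overline{A})=\overline{\tr A}$ and the value is real. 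Hence the two ratios agree state by state, and the suprema coincide.

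For the asymptotic bound, fix any $\widehat\rho\neq\widehat\sigma_\eps$ written in block form with weight $\varrho$ on $\ket{0}$ and lower block $\rho$, with $\tr\rho=1-\varrho=:t$. I will bound numerator and denominator separately.

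\textbf{Numerator.} By \eqref{eqn::N_sigma},
\[
\divgkl{\cn(\widehat\rho)}{\pi_N}=\divgkl{(1-t)\pi_N+t\,\ce(\rho/t)}{\pi_N}.
\]
By joint convexity, this is at most $t\,\divgkl{\ce(\rho/t)}{\pi_N}\le tC$.

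\textbf{Denominator.} Pinching onto the block structure $\ket{0}\oplus\hbt_N$ leaves $\widehat\sigma_\eps$ invariant, so data processing gives
\[
\divgkl{\widehat\rho}{\widehat\sigma_\eps}\ \ge\ d(t\|\eps)+t\,\divgkl{\rho/t}{\pi_N}\ \ge\ d(t\|\eps),
\]
where $d$ is the binary relative entropy. Expanding,
\[
d(t\|\eps)=t\log(1/\eps)+(1-t)\log\frac{1}{1-\eps}-h(t),
\]
with $h$ the binary entropy. So the ratio is at most $tC/d(t\|\eps)$.

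This crude bound fails when $t$ is comparable to $\eps$, since then $d(t\|\eps)$ can be much smaller than $t\log(1/\eps)$. That regime is the main obstacle. There I must use the numerator more carefully, together with the term $t\,\divgkl{\rho/t}{\pi_N}$.

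\textbf{Regime split.} The plan is to split at $t\ge \eps^{1-\delta}$ versus $t<\eps^{1-\delta}$ for a small $\delta>0$.

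In the first regime, $h(t)/t\le \log(1/t)+1\le(1-\delta)\log(1/\eps)+1$. This gives $d(t\|\eps)\ge t(\delta\log(1/\eps)-O(1))$. That alone only yields the constant $1/\delta$, which is too weak. So I instead compare directly: the ratio $tC/d(t\|\eps)$, viewed as a function of $t$, is maximized near $t=1$, where $d\approx\log(1/\eps)$. More precisely, $d(t\|\eps)/t$ is increasing in $t$ for $t\ge\eps$. Indeed, $d(t\|\eps)/t=\log(t/\eps)+\frac{1-t}{t}\log\frac{1-t}{1-\eps}$, and its derivative is $\frac{1}{t^2}\log\frac{1-\eps}{1-t}\ge0$. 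Hence for $t\ge\eps^{1-\delta}$ the ratio is at most $C/\big(\delta\log(1/\eps)-O(1)\big)$. This is still weak, so monotonicity needs to be used up to $t=1$. The supremum over $t\in[\eps,1]$ of $tC/d(t\|\eps)$ is attained at $t=1$, giving $C/\log(1/\eps)$. This settles all $t\ge\eps$ directly, with no need for $\delta$.

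In the small regime $t<\eps$, the pinched relative entropy no longer scales like $t\log(1/\eps)$. Here I would linearize instead. The numerator is quadratic in the perturbation: using $D\le\chi^2$-type bounds against $\pi_N$, it is at most $N(1-\gamma_{\mathrm{eff}})\,\|t(\ce(\rho/t)-\pi_N)\|_2^2=O(t^2)$. The denominator is at least the classical $d(t\|\eps)\gtrsim(t-\eps)^2/\eps$ plus $t\,\divgkl{\rho/t}{\pi_N}$. Off-diagonal coherences $v$ only increase the denominator relative to the pinched one, and they do not enter $\cn$ at all.

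When $t$ is near $\eps$, the numerator is $\le t^2\|\ce(\rho/t)-\pi_N\|^2N$. The denominator has $t\,\divgkl{\rho/t}{\pi_N}\gtrsim t\|\rho/t-\pi_N\|_2^2$. This gives a ratio $O(t)=O(\eps)$, which is $o(1/\log(1/\eps))$.

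Combining the regimes gives
\[
\eta_{\kl}(\cn,\widehat\sigma_\eps)\le \max\big\{C/\log(1/\eps),\,O(\eps)\big\}.
\]
This yields the $\limsup$ bound.
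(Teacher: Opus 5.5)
Your equality argument via complex conjugation is the same as the paper's and is fine. So are the numerator bound $tC$ from convexity and the pinched denominator $d(t\|\eps)+t\,\divgkl{\rho/t}{\pi_N}$. The gap is in the regime split.

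You correctly computed $\frac{d}{dt}\big[d(t\|\eps)/t\big]=t^{-2}\log\frac{1-\eps}{1-t}\ge 0$ on $[\eps,1]$, but you drew the opposite conclusion from it. Since $d(t\|\eps)/t$ is increasing, the bound $tC/d(t\|\eps)$ is \emph{decreasing} on $(\eps,1]$. So $t=1$ gives its \emph{minimum} $C/\log(1/\eps)$, not its maximum. As $t\downarrow\eps$ the bound blows up, since $d(\eps\|\eps)=0$, and already at $t=\sqrt{\eps}$ it is about $2C/\log(1/\eps)$. Hence this crude bound gives constant $1$ only for $t\ge\tau$ with $\log(1/\tau)=o(\log(1/\eps))$; this is the $1/\delta$ loss you had noticed. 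The whole intermediate range $\eps\le t\le\eps^{o(1)}$ is therefore left uncovered, because you invoke the quadratic argument only for $t<\eps$ or $t$ near $\eps$. The claimed bound $\max\{C/\log(1/\eps),O(\eps)\}$ is thus not established.

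The fix is that your quadratic argument is not confined to $t\approx\eps$. Using $\divgkl{\cdot}{\pi_N}\le D_2$ and $\log(1+y)\le y$, the numerator is at most
\[
Nt^2\|\ce(\rho/t)-\pi_N\|_2^2\le Nt^2\|\rho/t-\pi_N\|_2^2 .
\]
The second inequality holds because a mixed-unitary channel with $\ce(\pi_N)=\pi_N$ contracts the Hilbert--Schmidt norm; this should replace your undefined factor $1-\gamma_{\mathrm{eff}}$. Pinsker gives $t\,\divgkl{\rho/t}{\pi_N}\ge\frac{t}{2}\|\rho/t-\pi_N\|_2^2$. So the ratio is at most $2Nt$ for \emph{every} $t$. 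The paper instead applies Pinsker to the numerator and then data processing, which has the same effect.

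Then split at a threshold $\tau_\eps$ with $\tau_\eps\log(1/\eps)\to 0$ and $\log(1/\tau_\eps)=o(\log(1/\eps))$, for example $\tau_\eps=(\log(1/\eps))^{-2}$. Below the threshold the ratio is at most $2N\tau_\eps=o(1/\log(1/\eps))$. Above it, $d(t\|\eps)\ge t\big(\log t-1+\log(1/\eps)\big)$ gives a ratio at most $C/\big(\log(1/\eps)-2\log\log(1/\eps)-1\big)$. This is exactly the paper's split at $x^\star=\log(1/\eps)^{-2}$. It yields the $\limsup$ statement rather than a clean $\max\{C/\log(1/\eps),O(\eps)\}$ bound.
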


Given this lemma (which will be proved later), we know that for sufficient small $\eps$, 
$$\frac{\eta_{\kl}(\cn, \widehat{\sigma}_\eps)}{C/ \log(1/\eps)} \le \frac{1+\beta}{2} \in (1, \beta),$$
so that 
\begin{align*}
\eta_{\kl} (\cn\otimes \overline{\cn}, \widehat{\sigma}_\eps \otimes \widehat{\sigma}_\eps) \ge &\ \frac{\beta C}{\log(N) + \log(1/\eps)} \\
= &\  \frac{\log(1/\eps)}{\log(N) + \log(1/\eps)} \cdot \frac{\beta}{(1+\beta)/2} \cdot \frac{\frac{1+\beta}{2} C}{\log(1/\eps)} \\
\ge &\ \frac{\log(1/\eps)}{\log(N) + \log(1/\eps)} \frac{\beta}{(1+\beta)/2} \eta_{\kl}(\cn, \widehat{\sigma}_\eps)\\
> &\ \eta_{\kl}(\cn, \widehat{\sigma}_\eps),
\end{align*}
for sufficiently small $\eps$.
Now the remaining gap is simply \lemref{lem::eta_kl}.

\subsection{Proof of Lemma~\ref{lem::eta_kl}}

\subsubsection*{Proof of invariance under complex-conjugate} If $\rho\in \dm_N$, then $\overline{\rho}$ is also a density matrix, and it is easy to validate that
\begin{align*}
\overline{\ce}(\overline{\rho}) = \overline{\ce(\rho)}, \qquad \forall \rho\in\dm_N.
\end{align*}
Similarly, it is easy to verify that 
\begin{align*}
\overline{\cn}(\overline{\widehat{\rho}}) = \overline{\cn(\widehat{\rho})}, \qquad \forall \widehat{\rho}\in \dm_{N+1}.
\end{align*}
Therefore, 
\begin{align*}
\eta_{\kl}({\cn}, \widehat{\sigma}_\eps) &= \sup_{\widehat{\rho}\in \dm_{N+1}} \frac{\divgkl{\cn(\widehat{\rho})}{\cn(\widehat{\sigma}_\eps)}}{\divgkl{\widehat{\rho}}{\widehat{\sigma}_\eps}} = \sup_{\widehat{\rho}\in \dm_{N+1}} \frac{\divgkl{\cn(\widehat{\rho})}{\pi_N}}{\divgkl{\widehat{\rho}}{\widehat{\sigma}_\eps}}, \\
\eta_{\kl}(\overline{\cn}, \widehat{\sigma}_\eps) &= \sup_{\overline{\widehat{\rho}}\in \dm_{N+1}} \frac{\divgkl{\overline{\cn}(\overline{\widehat{\rho}})}{\pi_N}}{\divgkl{\overline{\widehat{\rho}}}{\widehat{\sigma}_\eps}} 
= \sup_{\overline{\widehat{\rho}}\in \dm_{N+1}} \frac{\divgkl{\overline{\cn(\widehat{\rho})}}{\pi_N}}{\divgkl{\overline{\widehat{\rho}}}{\widehat{\sigma}_\eps}} \\
& = \sup_{\overline{\widehat{\rho}}\in \dm_{N+1}} \frac{\divgkl{{\cn(\widehat{\rho})}}{\pi_N}}{\divgkl{{\widehat{\rho}}}{\widehat{\sigma}_\eps}} = \eta_{\kl}({\cn}, \widehat{\sigma}_\eps).
\end{align*}

\subsubsection*{Proof of upper bound}

It was known above that 
\begin{align*}
\eta_{\kl}({\cn}, \widehat{\sigma}_\eps) &= \sup_{\widehat{\rho}\in \dm_{N+1}} \frac{\divgkl{\cn(\widehat{\rho})}{\cn(\widehat{\sigma}_\eps)}}{\divgkl{\widehat{\rho}}{\widehat{\sigma}_\eps}} \myeq{\eqref{eqn::N_sigma}} \sup_{\widehat{\rho}\in \dm_{N+1}} \frac{\divgkl{\cn(\widehat{\rho})}{\pi_N}}{\divgkl{\widehat{\rho}}{\widehat{\sigma}_\eps}}.
\end{align*}
Now re-express $\widehat{\rho}$ as follows without losing generality:
\begin{align*}
\widehat{\rho} = \begin{bmatrix} 1-x & v \\ v^\dagger & x \rho\end{bmatrix}, \qquad \rho \in \dm_N,
\end{align*}
and let $\mathcal{P}$ be the ``dephasing" channel as follows,
\begin{align*}
\mathcal{P}(\widehat{\rho}) := \begin{bmatrix} 1-x & 0 \\ 0 & x \rho\end{bmatrix}.
\end{align*}

By the data processing inequality of quantum relative entropy for $\mathcal{P}$, one has 
\begin{align*}
\eta_{\kl}({\cn}, \widehat{\sigma}_\eps) &=  \sup_{\widehat{\rho}\in \dm_{N+1}} \frac{\divgkl{(1-x) \pi_N + x \ce(\rho)}{\pi_N}}{\divgkl{\widehat{\rho}}{\widehat{\sigma}_\eps}}  \\
&\le \sup_{\widehat{\rho}\in \dm_{N+1}} \frac{\divgkl{(1-x) \pi_N + x\ce(\rho)}{\pi_N}}{\divgkl{\mathcal{P}\widehat{\rho}}{\mathcal{P}\widehat{\sigma}_\eps}} \qquad \text{(data processing inequality)}\\
&= \sup_{x\in [0,1],\ \rho\in\dm_N}\ \frac{\divgkl{(1-x) \pi_N + x \ce(\rho)}{\pi_N}}{d(x||\eps) + x \divgkl{\rho}{\pi_N}},
\end{align*}
where $d(x||\eps) := x\log (x/\eps) + (1-x) \log \big((1-x)/(1-\eps)\big)$. In this way, the supremum can be split into optimizing $x$ and $\rho$ separately. Depending on the value of $x$, one should use different estimates: (a) when $x$ is relatively small, we can use perturbative expansion of $\divgkl{(1-x) \pi_N + x \ce(\rho)}{\pi_N}$ so that it has leading order $\order{x^2}$; (b) when $x$ is large, the key is to lower bound $d(x||\eps)$. \\

\emph{Case 1:} When we pick 
$x\le {x}^{\star} := \log(1/\eps)^{-2}$,
we have   
\begin{align*}
 &\ \divgkl{(1-x) \pi_N + x \ce(\rho)}{\pi_N} \\
=&\ \divgkl{\pi_N + x (\ce(\rho)-\pi_N)}{\pi_N} \\
\le&\ \divgarg{2}{\pi_N + x (\ce(\rho)-\pi_N)}{\pi_N}\\
=&\ \log\ \tr( \pi_N^{-1} \big(\pi_N + x (\ce(\rho) - \pi_N)\big)^2) \\
=&\ \log \Big( \tr( \pi_N + x^2 (\ce(\rho) - \pi_N)^2 \pi_N^{-1} )\Big)\\
=&\ \log \Big( 1 + x^2 \tr((\ce(\rho) - \pi_N)^2 \pi_N^{-1} )\Big) \\
\le&\ x^2 N \cdot \tr((\ce(\rho) - \pi_N)^2) \qquad \text{ (by } \log(1+y) \le y \text{)}\\
\le&\ x^2 N \cdot \big(\tr \abs{\ce(\rho) - \pi_N}\big)^2 \\
\le&\ 2 x^2 N \divgkl{\ce(\rho)}{\pi_N}, \qquad \text{ (quantum Pinsker ineq. \cite[Chp.~10.8.1]{wilde_quantum_2017})}
\end{align*}
where the third line above uses the monotonicity of Rényi divergences \cite[Theorem 7]{muller-lennert_quantum_2013} and $D_\alpha(\cdot\rVert\cdot)$ means quantum Rényi divergences with index $\alpha$ which encompasses quantum relative entropy as a special case $\alpha=1$. In the above, $D_2(\rho||\sigma) := \log \tr\big(\sigma^{-1/2} \rho \sigma^{-1/2} \rho\big)$. The second last inequality follows from the eigenvalue decomposition of $\ce(\rho) - \pi_N = \sum_{k} \lambda_k \ket{\phi_k}\bra{\phi_k}$, so that $\tr((\ce(\rho) - \pi_N)^2) = \sum_{k} \lambda_k^2 \le \big(\sum_k \abs{\lambda_k}\big)^2 = \big(\tr\abs{\ce(\rho)-\pi_N}\big)^2$.

Consequently,
\begin{align}
\label{eqn::case1}
\frac{\divgkl{(1-x) \pi_N + x \ce(\rho)}{\pi_N}}{d(x||\eps) + x \divgkl{\rho}{\pi_N}} \le 2x N \frac{\divgkl{\ce(\rho)}{\pi_N}}{\divgkl{\rho}{\pi_N}} \le 2 x N \le \frac{2 N}{\log(1/\eps)^2},
\end{align}
where the second inequality comes from data processing inequality, and the third inequality comes from the choice of $x \le x^\star$.\\

\emph{Case 2:} When $x > {x}^{\star}$. By the convexity of quantum relative entropy \cite[Corollary 11.9.2]{wilde_quantum_2017},
\begin{align*}
&\ \divgkl{(1-x) \pi_N + x \ce(\rho)}{\pi_N} \\
\le &\  (1-x) \divgkl{\pi_N}{\pi_N} + x \divgkl{\ce(\rho)}{\pi_N} \\
=&\ x \divgkl{\ce(\rho)}{\pi_N}.
\end{align*}
We shall also estimate 
\begin{align*}
d(x||\eps) &=  x\log (x/\eps) + (1-x) \log \big((1-x)/(1-\eps)\big) \\
& = x\big(\log(x) - \log(\eps)\big) + (1-x) \log(1-x) - (1-x)\log(1-\eps) \\
& \ge x\big(\log(x) - \log(\eps)\big) + (1-x) \log(1-x) \\
& \ge  x\big(\log(x) - \log(\eps)\big) - x \\
& = x \big( \log(x) - 1 - \log(\eps)\big).
\end{align*}
The above $(1-x) \log(1-x) \ge -x$ can be readily verified to hold for $x\in [0,1]$.

Then by the choice $x > x^\star$, one has
\begin{align*}
d(x||\eps) \ge x \big(\log (\frac{1}{(\log(1/\eps))^2}) - 1 - \log(\eps)\big).
\end{align*}
Hence, by combing the last two estimates about numerator and denominator, 
\begin{align*}
\frac{\divgkl{(1-x) \pi_N + x \ce(\rho)}{\pi_N}}{d(x||\eps) + x \divgkl{\rho}{\pi_N}} & \le \frac{x \divgkl{\ce(\rho)}{\pi_N}}{d(x||\eps)} \\
& \le \frac{\divgkl{\ce(\rho)}{\pi_N}}{\log (\frac{1}{(\log(1/\eps))^2}) - 1 + \log(1/\eps)}.
\end{align*}
By taking the sup over all $\rho$, one has
\begin{align}
\label{eqn::case2}
\begin{aligned}
\sup_{\rho} \frac{\divgkl{(1-x) \pi_N + x \ce(\rho)}{\pi_N}}{d(x||\eps) + x \divgkl{\rho}{\pi_N}} \le &\  \frac{ \sup_{\rho} \divgkl{\ce(\rho)}{\pi_N}}{\log (\frac{1}{(\log(1/\eps))^2}) - 1 + \log(1/\eps)} \\
=&\ \frac{C}{\log (\frac{1}{(\log(1/\eps))^2}) - 1 + \log(1/\eps)}.
\end{aligned}
\end{align}

Therefore, by summarizing the above two cases,
\begin{align*}
\eta_{\kl}({\cn}, \widehat{\sigma}_\eps) \myle{\eqref{eqn::case1},\eqref{eqn::case2}} & \max\big\{\ \frac{2N}{\log(1/\eps)^2}, \frac{C}{\log (\frac{1}{(\log(1/\eps))^2}) - 1 + \log(1/\eps)}\ \big\}.
\end{align*}
When $\eps$ is small enough, $\log(1/\eps)$ is the leading term, so that 
\begin{align*}
\limsup_{\eps\to 0}\ \eta_{\kl}({\cn}, \widehat{\sigma}_\eps) \log(1/\eps) \le  C.
\end{align*}

\subsection{Discussion}

The existential theorem is sufficient for our purpose. The above argument however does not provide a concrete example. More explicit construction of Hastings' counter-example was studied in \cite{fukuda_comments_2010,belinschi_almost_2016} providing an estimate of the dimension $N$ needed to observe violation of additivity of minimum output entropy. A generalization of counter-examples to Rényi entropy was explored in \cite{leung_counterexamples_2026}. Though the tensorization of SDPI does not hold in general, it would still be interesting to study whether the tensorization of SDPI constants holds in certain restrictive situations.

\section*{Acknowledgement}

The counter-examples and detailed proofs were generated by AI large language models; the generated proofs were validated and organized by the author. This work was supported by Shanghai Pilot Program for Basic Research and Shanghai Jiao Tong University 2030 Initiative.

\bibliography{ref}

\end{document}